\newcommand{\dv}[2]{\frac{\mathrm{d} #1 }{\mathrm{d} #2}}
\newcommand{\cmt}[1]{\quad \text{#1}}
\definecolor{linkcolor}{rgb}{0,0,0.6} 
\newtheorem{theorem}{Theorem}[section]
\newtheorem{corollary}{Corollary}[theorem]
\newtheorem{lemma}[theorem]{Lemma}
\newtheorem{axiom}{Axiom}[section]
\theoremstyle{definition}
\newtheorem{definition}{Definition}[section]
\theoremstyle{remark}
\newtheorem*{remark}{Remark}
\title{About constant-product automated market makers}
\author{
  Théodore Conrad\\
  \texttt{École Normale Supérieure}
  \and
  Guillaume Méroué\\
  \texttt{École Normale Supérieure de Lyon}
  \and
  Arthur Vinciguerra\\
  \texttt{École Normale Supérieure de Lyon}
}
\date{September 2022}
\begin{document}

\maketitle

% Outline of the paper

\begin{abstract}
    Constant-product market making functions were first introduced by Hayden Adams in 2017 to create Uniswap, a decentralised exchange on Ethereum. This enables users to exchange assets at any given rate. Some variations such as Balancer and Curve were later introduced. In this paper, we analyse the maths that rule this type of protocol. We show that splitting a trade in multiple smaller trades does not impact the final exchange rate. We also show that the protocol is safer and more profitable when no one recompounds their fees.
\end{abstract}

% Start by introduction
\section{Introduction}

In October 2008, Satoshi Nakamoto published the Bitcoin whitepaper \cite{bitcoin}: the first \textit{cryptocurrency}, a decentralised currency that no one can control. Since Satoshi Nakomoto's revolutionnary paper, a lot of effort has been made to try decentralising things. In 2014, Vitalik Buterin published the whitepaper \cite{ethereum} of his own blockchain: Ethereum the first one implementing \textit{smart contracts}, a piece of code that is run on the blockchain. This technological breakthrough enabled the making of the first Decentralised Finance (DeFi) protocol MakerDao \cite{makerdao}. The latter enabled users to make collateralized debt position using cryptocurrencies, ie. deposit cryptocurrency to borrow money. Since then, the number of DeFi protocols skyrocketted and the total value locked in DeFi smart contracts reached a $\$250$b all time high \cite{defilama} in december 2021. In this paper, we focus on one specific type of DeFi protocol: decentralised exchanges. The first decentralised exchange protocol, Uniswap \cite{UniswapV1}, was proposed by Hayden Adams in 2017 and was released in 2018 on the Ethereum blockchain. Uniswap is an automated market maker (AMM): users can lock cryptocurrencies on a smart contract, which will be used as liquidity to market make. Uniswap uses a \textit{constant product} market making model, which we discuss later. This model works well in general but is not optimal for token pairs which exchange rate is stable. For this purpose, Michael Egorov created another protocol, Curve \cite{Curve}, that uses another maket making model that focuses on stable token pairs.\\
The betanet version of Beaker uses a Uniswap v2 \cite{UniswapV2} model with two changes: fees are not automatically compounded in the pool and users can trade with a given spread. In this paper, we break down the maths behind this protocol and explore various questions that can arise when interacting with it.

\newpage
% Print table of content
\tableofcontents
\newpage

% Then a section about the maths
\newpage
\section{Maths for a constant-product AMM}
In this section, we explore the maths behind the fixed product model. In this model, pools containing two assets are set up and any individual (called a liquidty provider) can provide liquidity to it. These pools act as markets makers and provide liquidity at any given rate. More elaborate models enable each individual to provide liquidity for a given price range \cite{UniswapV3}. \\ 

In the rest of the paper we consider a pool made of tokens $\mathbf{X}$ and $\mathbf{Y}$.
Let $x$ (resp. $y$) be the amount of tokens $\mathbf{X}$ (resp. $\mathbf{Y}$) in the pool, and let $r$ be the exchange rate for the pool.

We start by choosing the definition of \textit{liquidity} and the \textit{exchange rate} of our pool. These define how our pool is going to market make.

There are two equations that are at the center of the model. The first one states that the product of the number of tokens should be a constant and is often called \textit{fixed product} equation. In mathematical terms, this means that the geometric mean should always stay constant. This geometric mean is what we call \textit{liquidity}.

\begin{definition}
The liquidity $L$ of the pool is defined by the equation:
\[
    L = \sqrt{xy} .
\]
\end{definition}

Therefore, the fixed product assumption just means that the liquidity is constant during a trade.
\begin{axiom}
$L$ is an invariant of a trade.
\end{axiom}

Then, we must decide how our pool is going to make the market ie. how it is going to choose the exchange rate. A natural idea is to choose $r = -\frac{\partial x}{\partial y}$. In fact, $r$ is the number of tokens $\mathbf{X}$ we receive when we add an infinitesimal amount of tokens $\mathbf{Y}$ to the pool. For simplicity, we use the result of the previous computation as the definition of $r$.
\begin{definition}
We set the exchange rate of the pool to be :
\[
    r = \frac{x}{y}.
\]
\end{definition}

Now that we have defined these concepts, we can define what is the value of the pool. 
\begin{definition}[Value of the pool]
The value $V$ of the pool is defined by the equation:
\[
    V = p_xx + p_yy,
\]
where $p_x$ and $p_y$ are the market prices (in $\$$ by token) of the tokens $\mathbf{X}$ and $\mathbf{Y}$.
\end{definition}

To simplify a bit, we make a final assumption on the exchange rate of the pool.

\begin{axiom}[Arbitrage assumption]
We assume that the exchange rate of the pool is always equal to the exchange rate of the market: 
\[
    \frac{p_y}{p_x} = \frac{x}{y} .
\]
\end{axiom}
In reality, when the exchange rate of the pool is not equal to the one of the market, arbitrageurs can make a small profit by restoring the equality. Therefore, this assumption holds in practice.

In the next subsections, we will give various relations between the variables of the pool and we will describe how these variables change when a trade is done. We later explore the notion of \textit{Impermanent Loss}, which characterizes the risks of a liquidity provider and we conclude by evaluating the evolution of the portfolio of liquidity providers. 

\pagebreak

\subsection{Some useful relations}
In this subsection, we will explore various relations involving $x$, $y$, $L$, $r$ and $V$.

\begin{lemma}[Dependency on $r$]
For all $r\in ]0,+\infty[$ and $L \in [0,+\infty[$ : 
\begin{align*}
    x &= L\sqrt{r}\\
    y &= \frac{L}{\sqrt{r}} .
\end{align*}
\end{lemma}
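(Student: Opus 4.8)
The plan is to solve the two defining equations simultaneously. We have two relations at our disposal: the definition of liquidity, $L=\sqrt{xy}$, and the definition of the exchange rate, $r=x/y$. These give us two equations in the two unknowns $x$ and $y$, so the strategy is simply to treat this as a $2\times 2$ algebraic system and express $x$ and $y$ in terms of the parameters $L$ and $r$.

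First I would square the liquidity relation to obtain $L^2=xy$, which is more convenient to manipulate than the square root. From the exchange-rate definition I would write $x=ry$. Substituting this into $L^2=xy=ry^2$ yields $y^2=L^2/r$, and since $y$ represents a (nonnegative) token amount and $r>0$, taking the positive square root gives $y=L/\sqrt{r}$. Then back-substituting into $x=ry$ immediately gives $x=r\cdot L/\sqrt{r}=L\sqrt{r}$, which is the first claimed identity. A symmetric verification is that these two expressions indeed satisfy both starting equations: their product is $L\sqrt{r}\cdot L/\sqrt{r}=L^2$, recovering the liquidity, and their quotient is $L\sqrt{r}\big/(L/\sqrt{r})=r$, recovering the exchange rate.

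The only genuine subtlety is the sign choice when taking the square root of $y^2=L^2/r$. Since $x$ and $y$ are physical quantities of tokens in the pool they are nonnegative, and the hypotheses $r\in\,]0,+\infty[$ and $L\in[0,+\infty[$ guarantee that $\sqrt{r}$ is well defined and positive; hence the positive root is the correct one and the formulas are unambiguous. I do not anticipate any real obstacle here: the computation is short and the domain assumptions on $r$ and $L$ are exactly what is needed to make the square roots meaningful. The one point worth stating explicitly is the $L=0$ boundary case, where both formulas correctly collapse to $x=y=0$, consistent with an empty pool.
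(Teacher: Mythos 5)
Your proof is correct and is exactly what the paper means when it says the result is ``straightforward from the definitions of $L$ and $r$'': you solve the system $L^2 = xy$, $r = x/y$ by substitution, with the appropriate sign choice. Your explicit attention to the positive root and the $L=0$ boundary case is a reasonable fleshing-out of the one-line proof in the paper, not a different approach.
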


\begin{proof}
This is straightforward from the definitions of $L$ and $r$. 
\end{proof}

\begin{lemma}[Expression involving $V$]
We have the following relationships : 
\begin{align*}
    x &= \frac{1}{2}\frac{V}{p_x} \\
    y &= \frac{1}{2}\frac{V}{p_y} .
\end{align*}
\end{lemma}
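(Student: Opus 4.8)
The plan is to combine the definition of the pool value $V = p_x x + p_y y$ with the arbitrage assumption, which is the only ingredient beyond the two definitions that has not yet been used. The essential observation is that the arbitrage assumption $\frac{p_y}{p_x} = \frac{x}{y}$ cross-multiplies to $p_y y = p_x x$; in words, the two monetary legs of the pool value are exactly balanced. This equal-split property is really the entire content of the lemma, so the proof amounts to making it explicit and feeding it into the definition of $V$.

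First I would rewrite the arbitrage assumption in the product form $p_x x = p_y y$. Then I would substitute this identity into $V = p_x x + p_y y$: replacing $p_y y$ by $p_x x$ gives $V = 2 p_x x$, and symmetrically replacing $p_x x$ by $p_y y$ gives $V = 2 p_y y$. Dividing the first relation by $2 p_x$ and the second by $2 p_y$ then yields the two claimed formulas $x = \frac{1}{2}\frac{V}{p_x}$ and $y = \frac{1}{2}\frac{V}{p_y}$ directly.

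There is no genuine obstacle here: once the arbitrage assumption is put in product form, the argument is a single algebraic manipulation. The only point deserving a moment's care is that $p_x$ and $p_y$ are strictly positive market prices, so the divisions are legitimate; this is implicit in the setup but worth keeping in mind. The statement can therefore be viewed as the quantitative form of the intuitive fact that, under arbitrage, a constant-product pool always holds exactly half of its value in each token.
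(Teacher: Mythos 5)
Your proposal is correct and follows essentially the same route as the paper: both rewrite the arbitrage assumption as $p_x x = p_y y$, substitute into $V = p_x x + p_y y$ to obtain $V = 2p_x x$ (and symmetrically $V = 2p_y y$), and divide. The only addition is your explicit remark that $p_x, p_y > 0$ justifies the division, which the paper leaves implicit.
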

\begin{proof}
Using the definition of $V$ and the arbitrage assumption, we get :
\begin{align*}
    V &= p_xx + p_yy \\
      &= p_xx + p_xx \\
      &= 2p_xx
\end{align*}
The same can be done for $y$.
\end{proof}

\begin{lemma}[Expression for L]
The liquidity of the pool is linked to its value and to the prices of $\mathbf{X}$ and $\mathbf{Y}$ by the following expression:
\[
    L = \frac{1}{2}\frac{V}{\sqrt{p_yp_x}} .
\]
\end{lemma}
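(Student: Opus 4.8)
The plan is to combine the two previous lemmas. From the "Expression involving $V$" lemma I have both $x = \tfrac{1}{2}\tfrac{V}{p_x}$ and $y = \tfrac{1}{2}\tfrac{V}{p_y}$, and from the definition of liquidity I have $L = \sqrt{xy}$. So the natural route is simply to substitute these two expressions into $L = \sqrt{xy}$ and simplify.

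Concretely, I would write
\begin{align*}
    L &= \sqrt{xy} \\
      &= \sqrt{\frac{1}{2}\frac{V}{p_x}\cdot\frac{1}{2}\frac{V}{p_y}} \\
      &= \sqrt{\frac{V^2}{4\,p_xp_y}} \\
      &= \frac{1}{2}\frac{V}{\sqrt{p_yp_x}},
\end{align*}
pulling the factor $\tfrac14$ out of the square root as $\tfrac12$ and the $V^2$ out as $V$. The last simplification uses that $V$, $p_x$, and $p_y$ are all positive quantities (prices in dollars per token and a pool value), so no absolute-value or sign subtleties arise when taking the square root.

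I do not anticipate a genuine obstacle here, since the result is purely a one-line algebraic substitution; the only thing worth stating explicitly is the positivity of $V$, $p_x$, and $p_y$ that lets me write $\sqrt{V^2} = V$ and $\sqrt{p_xp_y}$ without sign concerns. An equally valid alternative, if one prefers to avoid invoking the $V$-lemma directly, would be to start from Lemma on $V$ together with the arbitrage assumption $p_y/p_x = x/y$: this gives $\sqrt{p_xp_y} = p_x\sqrt{y/x}\cdot\sqrt{x/y}\,$-style manipulations, but this is strictly more work than the direct substitution, so I would present the substitution proof.
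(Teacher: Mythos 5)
Your proof is correct and follows exactly the route the paper intends: the paper's one-line proof ("This results from $L^2 = xy$") is precisely your substitution of $x = \frac{1}{2}\frac{V}{p_x}$ and $y = \frac{1}{2}\frac{V}{p_y}$ into $L^2 = xy$, just written out in full. Your explicit remark about positivity of $V$, $p_x$, $p_y$ when taking the square root is a reasonable bit of added care, but the argument is the same.
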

\begin{proof}
This results from $L^2 = xy$.
\end{proof}

These 3 expressions give various ways of determining $x$, $y$, $r$, $L$ and $V$.

\pagebreak

\subsection{Evolution of the pool with a swap}
In this subsection, we try to describe the evolution of $x$, $y$, $r$ and $V$ when a trade is made by the pool. To simplify the notations, we will call $x$ (resp. $y$) the number of tokens $\mathbf{X}$ (resp. $\mathbf{Y}$) before the swap and $x'$ (resp. $y'$) the number of tokens $\mathbf{X}$ (resp. $\mathbf{Y}$) after the swap. We similarly define $r$ and $r'$.

\begin{theorem}[State of the variables after a swap]
After a swap, we have the following equations :
\begin{align*}
    x'y' &= xy \\
    r' &= r \left(\frac{y}{y'}\right)^2
\end{align*}
\end{theorem}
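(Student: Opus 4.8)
The plan is to treat the two identities separately: the first is essentially a restatement of the fixed-product axiom, while the second is a short algebraic consequence of the first together with the definition of the exchange rate.

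First I would establish $x'y' = xy$. By the definition of liquidity, $L = \sqrt{xy}$, so the pre-swap liquidity satisfies $L^2 = xy$ and the post-swap liquidity $L' = \sqrt{x'y'}$ satisfies $L'^2 = x'y'$. Invoking the axiom that $L$ is an invariant of a trade gives $L' = L$, hence $L'^2 = L^2$, which is exactly $x'y' = xy$. This step is immediate once the definitions are unwound, and it carries the entire analytic content of the theorem.

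For the second identity I would begin from $r' = x'/y'$, i.e.\ the definition of the exchange rate applied after the swap. Solving the first identity for $x'$ gives $x' = xy/y'$, and substituting yields $r' = xy/y'^2$. Replacing $x$ by $ry$, which comes from $r = x/y$, then gives
\[
r' = \frac{ry \cdot y}{y'^2} = r\left(\frac{y}{y'}\right)^2 ,
\]
as claimed.

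The main point requiring care is the ordering: the second equation genuinely depends on the first, so I would prove $x'y' = xy$ before manipulating $r'$. Beyond that there is no real obstacle; the whole argument reduces to the invariance of $L$ plus careful bookkeeping of primed versus unprimed quantities while eliminating $x'$ in favour of $x$, $y$, and $y'$.
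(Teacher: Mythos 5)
Your proof is correct and follows essentially the same route as the paper: the first identity comes straight from the invariance of $L$ (i.e.\ $x'y' = L^2 = xy$), and the second from substituting $x' = xy/y'$ and $x = ry$ into $r' = x'/y'$. No gaps or meaningful differences.
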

\begin{proof}
The first equation directly follows from $x' y' = L^2 = x y$. \\
The second one is also straightforward to get : 
\begin{align*}
    r' &= \frac{x'}{y'} \\
        &= \frac{1}{y'}\frac{xy}{y'} \\
        &= \frac{(ry) y}{(y')^2} \\
        &= r \left(\frac{y}{y'}\right)^2
\end{align*}
\end{proof}

Before describing the equation that models how a trade takes place, we give a final definition. 

\begin{definition}
The spread $\sigma$ of a transaction is defined as the relative rate change during a transaction 
\[
    \sigma = \left| \frac{r' - r}{r} \right| .
\]
\end{definition}

\begin{corollary}[Swap equation for $\mathbf{Y}$]
Suppose that a user wants to trade $n$ tokens $\mathbf{Y}$ for some tokens $\mathbf{X}$ with a maximum spread $\sigma$, then the user will receive a quantity $m$ of tokens $\mathbf{X}$ given by the following equation : 
\[
    m = \frac{x\alpha}{y + \alpha}, \cmt{with $\alpha = \min \left(n, y\left( \frac{1}{\sqrt{1 - \sigma}} - 1 \right) \right)$}
\]
\end{corollary}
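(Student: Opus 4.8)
The plan is to first handle the idealized case with no spread constraint, then impose the spread limit. Suppose a user deposits a quantity $\alpha$ of tokens $\mathbf{Y}$ into the pool and receives $m$ tokens $\mathbf{X}$. After the swap, the pool holds $y' = y + \alpha$ tokens $\mathbf{Y}$ and $x' = x - m$ tokens $\mathbf{X}$. I would invoke the first equation of the swap theorem, namely $x'y' = xy$, which encodes the constant-liquidity axiom. Substituting gives $(x - m)(y + \alpha) = xy$, and solving this single linear equation for $m$ yields $m = \dfrac{x\alpha}{y + \alpha}$. This establishes the functional form of the swap equation for an arbitrary input quantity $\alpha$.

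The remaining work is to justify the specific value of $\alpha$, i.e. why $\alpha = \min\!\left(n, y\left(\tfrac{1}{\sqrt{1-\sigma}} - 1\right)\right)$. The idea is that the user wants to trade all $n$ tokens, but the protocol caps the transaction so that its relative rate change never exceeds the maximum spread $\sigma$. So I would compute the spread induced by depositing a quantity $\alpha$ and then find the largest $\alpha$ compatible with the constraint $\left|\tfrac{r'-r}{r}\right| \le \sigma$. Using the second equation of the swap theorem, $r' = r\left(\tfrac{y}{y'}\right)^2 = r\left(\tfrac{y}{y+\alpha}\right)^2$, I note that adding $\mathbf{Y}$ increases $y$ and hence decreases $r$, so $r' < r$ and the spread is $\sigma_\alpha = \dfrac{r - r'}{r} = 1 - \left(\dfrac{y}{y+\alpha}\right)^2$.

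Next I would solve the threshold equation $\sigma_\alpha = \sigma$ for the critical input. Setting $1 - \left(\tfrac{y}{y+\alpha}\right)^2 = \sigma$ gives $\left(\tfrac{y}{y+\alpha}\right)^2 = 1 - \sigma$, hence $\tfrac{y+\alpha}{y} = \tfrac{1}{\sqrt{1-\sigma}}$ (taking the positive root, valid since $0 \le \sigma < 1$), and therefore the maximal admissible input is $\alpha_{\max} = y\left(\tfrac{1}{\sqrt{1-\sigma}} - 1\right)$. Since $\sigma_\alpha$ is increasing in $\alpha$, the constraint $\sigma_\alpha \le \sigma$ is equivalent to $\alpha \le \alpha_{\max}$. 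The user trades the desired amount $n$ whenever it is permissible, and otherwise the protocol executes only up to the cap; combining these gives $\alpha = \min(n, \alpha_{\max})$, which is exactly the stated expression.

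I expect the main obstacle to be purely expository rather than technical: the computation itself is elementary, so the care lies in clearly articulating why the executed amount is the minimum of the desired trade and the spread cap, and in justifying the monotonicity of $\sigma_\alpha$ that makes the constraint equivalent to a simple bound on $\alpha$. One should also confirm the sign conventions—that a $\mathbf{Y}$-deposit lowers $r$ so the absolute value in the definition of $\sigma$ resolves to $r - r'$—and that $1 - \sigma > 0$ so the square root is well defined. Once $\alpha$ is fixed, substituting back into $m = \tfrac{x\alpha}{y+\alpha}$ closes the argument.
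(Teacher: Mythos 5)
Your proposal is correct and follows essentially the same route as the paper's proof: both derive the spread cap $q_\sigma = y\bigl(\tfrac{1}{\sqrt{1-\sigma}}-1\bigr)$ from $r' = r\left(\tfrac{y}{y'}\right)^2$, take $\alpha = \min(n, q_\sigma)$, and obtain $m = \tfrac{x\alpha}{y+\alpha}$ from the constant-product relation $x'y' = xy$. Your additions (checking the sign so the absolute value resolves to $r - r'$, and noting the monotonicity of the induced spread in $\alpha$) are small rigor improvements on the same argument, not a different approach.
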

\begin{proof}
By definition of spread, we have a restriction on the maximum amount $q$ of tokens $\mathbf{Y}$ that can be traded. When the spread is limiting, we have : 
\begin{align*}
        \sigma &= \frac{r - r'}{r} \\
        \sigma &= 1 - \left(\frac{y}{y'}\right)^2 . 
\end{align*}
Let $q_\sigma$ be the maximum amount of tokens $y$ that can be traded with spread $\sigma$.
Injecting $y' = y + q_{\sigma}$ in the previous equation gives : 
\[
     1 - \sigma = \left(\frac{y}{y + q_{\sigma}}\right)^2 
\]
which finally gives the following equation for $q_{\sigma}$ : 
\[
    q_\sigma = y\left( \frac{1}{\sqrt{1 - \sigma}} - 1 \right) .
\]
 Therefore, the maximum amount $\alpha$ of tokens $\mathbf{X}$ that the user will swap is 
 \[
     \alpha = \min \left(n, y\left( \frac{1}{\sqrt{1 - \sigma}} - 1 \right) \right) .
 \]
Thus, we have $y' = y + \alpha$ and we can get the number of tokens $m$ that the user will receive: 
\begin{align*}
    m &= x - x' \\ 
        &= x - \frac{xy}{y'} \\
        &= x\left(  1 - \frac{y}{y'}\right) \\
        &= \frac{x\alpha}{y + \alpha} \\
\end{align*}
\end{proof}

\begin{corollary}[Swap equation for $\mathbf{X}$]
Suppose that a user wants to trade $n$ tokens $\mathbf{X}$ for some tokens $\mathbf{Y}$ with a maximum spread $\sigma$, then the user will receive a quantity $m$ of tokens $\mathbf{Y}$ given by the following equation:
\[
    m = \frac{y\alpha}{x + \alpha}, \cmt{with $\alpha = \min \left(n, x\left( \sqrt{1 + \sigma} - 1 \right) \right)$}
\]
\end{corollary}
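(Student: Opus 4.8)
The plan is to mirror the proof of the swap equation for $\mathbf{Y}$, interchanging the roles of $x$ and $y$ throughout. The one genuine difference is the direction in which the rate moves: here the user deposits tokens $\mathbf{X}$, so $x$ grows and $y$ shrinks along the invariant $x'y' = xy$, which makes $r = x/y$ \emph{increase}. Consequently the absolute value in the definition of $\sigma$ must be resolved as $\sigma = (r' - r)/r$ rather than $(r - r')/r$, and this single sign flip is exactly what will produce $\sqrt{1 + \sigma}$ in place of the $1/\sqrt{1 - \sigma}$ that appeared in the $\mathbf{Y}$-case.

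First I would re-express the rate relation of the preceding theorem in terms of $x$. Starting from $r' = r\left(y/y'\right)^2$ and using $x'y' = xy$ to write $y/y' = x'/x$, I obtain $r' = r\left(x'/x\right)^2$. Plugging this into the limiting-spread condition $\sigma = (r'-r)/r$ then gives $1 + \sigma = \left(x'/x\right)^2$.

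Next, I would let $q_\sigma$ denote the maximum quantity of $\mathbf{X}$ that can be deposited and substitute $x' = x + q_\sigma$ into the relation above. Solving $1 + \sigma = \left(1 + q_\sigma/x\right)^2$ for the positive root yields $q_\sigma = x\left(\sqrt{1+\sigma} - 1\right)$, so the amount actually deposited is $\alpha = \min\left(n, x(\sqrt{1+\sigma}-1)\right)$ and hence $x' = x + \alpha$. Finally, computing the tokens received as $m = y - y' = y\left(1 - x/x'\right) = y\alpha/(x+\alpha)$ closes the argument.

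The main obstacle, and really the only non-mechanical point, is getting the sign of the rate change right; once one observes that depositing $\mathbf{X}$ raises $r$, every remaining step is a verbatim mirror of the $\mathbf{Y}$-case computation and presents no further difficulty.
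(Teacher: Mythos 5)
Your proposal is correct and follows essentially the same route as the paper: resolve the absolute value as $\sigma = (r'-r)/r$ since depositing $\mathbf{X}$ raises the rate, derive $1+\sigma = \left(x'/x\right)^2$, solve for $q_\sigma = x\left(\sqrt{1+\sigma}-1\right)$, and mirror the $\mathbf{Y}$-case computation for $m$. The only cosmetic difference is that you spell out the steps the paper abbreviates (deriving $r' = r\left(x'/x\right)^2$ from the swap theorem, and the final computation of $m$, which the paper dismisses as ``similar to the previous one'').
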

\begin{proof}
Because $r' > r$, the expression of the spread in this case is : 
\begin{align*}
    \sigma &= \frac{r'}{r} - 1\\
           &= \left( \frac{x'}{x} \right)^2 - 1
\end{align*}
Let $q_\sigma$ the maximum amount of tokens $x$ that can be traded with spread $\sigma$. Injecting $x' = x + q_\sigma$ in the previous equation yields:
\[ 1 + \sigma = \left( 1 + \frac{q_\sigma}{x} \right)^2 \]
and 
\[ q_\sigma = x\left( \sqrt{1 + \sigma} - 1 \right). \]
The rest of the proof is similar to the previous one.
\end{proof}

\begin{corollary}[Realized exchange rate for a swap]
Suppose that a user wants to trade $n$ tokens $\mathbf{Y}$ for some tokens $\mathbf{X}$, the the actual exchange rate $r$ that the user will get is 
\[
    r = \frac{x}{y + n}.
\]
\end{corollary}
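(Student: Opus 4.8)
The plan is to read the ``realized exchange rate'' as the ratio of tokens $\mathbf{X}$ actually received to tokens $\mathbf{Y}$ actually supplied, in keeping with the pool's convention that $r = x/y$ counts tokens $\mathbf{X}$ per token $\mathbf{Y}$. Writing $m$ for the amount of $\mathbf{X}$ the user obtains in exchange for the $n$ tokens $\mathbf{Y}$, the quantity to compute is therefore the average rate $m/n$ over the whole trade, and the goal is to show it equals $\frac{x}{y+n}$.

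First I would invoke the Swap equation for $\mathbf{Y}$ established in the earlier corollary. Since the present statement imposes no spread constraint, the cap $\alpha = \min\left(n, y\left(\frac{1}{\sqrt{1-\sigma}} - 1\right)\right)$ is not binding and we may take $\alpha = n$, i.e. the entire order is filled. That corollary then gives
\[
    m = \frac{x n}{y + n}.
\]
Dividing by $n$ yields $m/n = \dfrac{x}{y+n}$, which is exactly the claimed realized rate.

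To keep the argument self-contained I would also rederive $m$ directly from the fixed-product invariant rather than quoting the corollary. Adding $n$ tokens $\mathbf{Y}$ gives $y' = y + n$; by the theorem on the state of the variables after a swap, $x' y' = x y$, so $x' = \frac{xy}{y+n}$. The user receives
\[
    m = x - x' = \frac{x(y+n) - xy}{y+n} = \frac{xn}{y+n},
\]
and once more $m/n = \frac{x}{y+n}$.

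The computation is routine; the only genuine point of care is the bookkeeping of the definition. I would confirm that the rate the user experiences is $m/n$ (output over input) rather than its reciprocal, and that the absence of a stated spread legitimately lets us set $\alpha = n$ so that the full trade clears. Once those conventions are pinned down, the result is a one-line consequence of the constant-product law, and it is worth remarking that this average rate $\frac{x}{y+n}$ is simply the pool rate $x/y$ evaluated with the post-trade denominator.
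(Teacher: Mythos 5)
Your proposal is correct and follows the paper's own route: both take the swap equation for $\mathbf{Y}$ with the spread constraint non-binding (so $\alpha = n$), obtain $m = \frac{xn}{y+n}$, and divide by $n$ using the definition $r = \frac{m}{n}$ of the realized rate. The extra rederivation of $m$ from the constant-product invariant is a harmless repetition of the earlier corollary's computation, not a different argument.
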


\begin{proof}
When there is no maximum spread, the previous equations becomes
\[
    m = \frac{xn}{y + n} .
\]
We have by definition $r = \frac{m}{n}$. Therefore :
\[
    r = \frac{x}{y + n}.
\]
\end{proof}

One could ask whether splitting a single transaction into $k$ smaller transactions would yield more tokens in return. The following theorem states that it is not the case.
\begin{theorem}
Let $k$ be a positive integer and $m_i$ the number of tokens received at the $i^{th}$ transaction $(1 \leq i \leq k)$, then :
\[
    m = \sum_{i=1}^k m_i .
\]
\end{theorem}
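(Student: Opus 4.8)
The plan is to exploit the constant-product invariant across the whole chain of trades and reduce the total to a telescoping sum. First I would fix notation: let $y_0 = y$ be the amount of tokens $\mathbf{Y}$ before any trade, and let $n_i$ be the amount of $\mathbf{Y}$ added at the $i$-th transaction, so that after that trade the pool holds $y_i = y_{i-1} + n_i$ tokens $\mathbf{Y}$. Splitting the original trade of $n$ tokens means $\sum_{i=1}^k n_i = n$, hence $y_k = y + n$. Writing $x_i$ for the amount of $\mathbf{X}$ after the $i$-th trade, the quantity received at step $i$ is by definition $m_i = x_{i-1} - x_i$.

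The crucial observation is that the fixed-product axiom holds at each individual trade, so $x_i y_i = L^2 = xy$ is the same constant throughout the entire sequence; in particular $x_i = \frac{xy}{y_i}$ for every $i$. Substituting this into $m_i = x_{i-1} - x_i$ gives $m_i = xy\left(\frac{1}{y_{i-1}} - \frac{1}{y_i}\right)$, and summing over $i$ from $1$ to $k$ collapses the sum by telescoping to $xy\left(\frac{1}{y_0} - \frac{1}{y_k}\right) = xy\left(\frac{1}{y} - \frac{1}{y+n}\right)$. A short simplification then yields $\frac{xn}{y+n}$, which is exactly the single-trade output $m$ given by the realized-exchange-rate corollary.

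The argument is essentially routine once the invariant is seen to persist across all $k$ trades, so I do not expect a serious obstacle; the only point requiring care is that the product $xy$ appearing in each $m_i$ is genuinely the \emph{same} constant, so that it can be factored out of the sum before telescoping. I would also state explicitly at the outset that we work in the regime where the maximum spread is never binding at any step (otherwise some $n_i$ would be capped and the split trade could not deliver the full amount $n$), so that each intermediate trade is governed by the plain swap equation $m_i = \frac{x_{i-1} n_i}{y_{i-1} + n_i}$, consistent with the difference form used above.
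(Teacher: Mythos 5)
Your proof is correct and follows essentially the same route as the paper's: both rely on the constant product $x_i y_i = xy$ persisting across all $k$ trades, rewrite each $m_i$ as $xy\left(\frac{1}{y_{i-1}} - \frac{1}{y_i}\right)$, and telescope the sum to recover the single-trade output $\frac{xn}{y+n}$. Your explicit remark that the spread must never be binding (the paper handles this by taking $\sigma \to 1$) is a welcome clarification, but the argument is the same.
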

\begin{proof}

Let $k$ be a positive integer and $x_i$ (resp. $y_i$) be the number of tokens $\mathbf{X}$ (resp. $\mathbf{Y}$) in the pool after the $i^{th}$ transaction ($1 \leq i \leq k$) with $x$ (resp. $y$) the inital amounts of tokens in the pool. Let $n_i$ be the number of tokens $\mathbf{Y}$ traded at the $i^{th}$ transaction. We have the following relation:
\[
    y_{i} = y_{i-1} + n_i \text{ and } y_0 = y
\]
In this situation, we allow for any spread ie. $\sigma \rightarrow  1$. We can compute the sum : 
\begin{align*}
    \sum_{i=1}^k m_i &= \sum_{i=1}^k \frac{x_{i-1}n_i}{y_{i-1} + n_i} \cmt{using the swap equation with $n_i$ and $\sigma \rightarrow  1$} \\
                       &= \sum_{i=1}^k \frac{xy}{y_{i-1}}\frac{n_i}{y_{i-1} + n_i} \cmt{using $x_{i-1}y_{i-1}     = x_{i-2}y_{i-2} = \dots = xy$} \\
                       &= xy\sum_{i=1}^k \left(\frac{1}{y_{i-1}} - \frac{1}{y_{i-1} + n_i} \right) \\
                       &= xy\sum_{i=1}^k \left( \frac{1}{y_{i-1}} - \frac{1}{y_{i}} \right) \\
                       &= xy \left( \frac{1}{y} - \frac{1}{y_k}\right) \cmt{by telescoping the sum} \\
                       &= x - \frac{xy}{y_k} \\
                       &= x - x_k \cmt{using $xy = x_ky_k$} \\
                       &= m.
\end{align*}
\end{proof}

\subsection{Impermanent Loss} 
Something that is often overlooked is the \textit{impermanent loss} that the providers of the pool can face. It represents the risk that a provider takes by providing liquidity. Therefore the fee $f$ of the pool not only rewards the providers for the service they are offering but also the risk they are willing to take. But what is impermanent loss? Impermanent loss is the relative difference between the value of a portfolio invested in a pool and the value it would have if it had not been invested. It is called \textit{impermanent} because this loss depends on the current rate of exchange of both tokens $\mathbf{X}$ and $\mathbf{Y}$. Hence, the loss can fluctuate and is only realised when providers remove liquidity from the pool. In this section, $x$ and $y$ will represent the initial number of tokens held by the user and $x'$ and $y'$ will represent these same numbers after being invested in the pool and submited to some fluctuations of the exchange rate $r$.

\begin{definition}[Impermanent loss]
Let $V_p$ be the value of the portfolio if invested in the pool and $V_n$ the value of the portfolio if not invested in the pool. Then, the impermanent loss is defined by :
\[
    \Lambda = \frac{V_p - V_n}{V_n}
\]
\end{definition}

\begin{theorem}
Let $\delta_x$ and $\delta_y$ be the relative price changes of $\mathbf{X}$ and $\mathbf{Y}$, then the impermanent loss is : 
\[
    \Lambda = 2\frac{\sqrt{\delta_y\delta_x}}{\delta_x + \delta_y} - 1 ,
\]
with $\delta_x = \frac{p_x'}{p_x}$ and $\delta_y = \frac{p_y'}{p_y}$ .
\end{theorem}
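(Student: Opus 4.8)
The plan is to compute $V_p$ and $V_n$ separately, form their ratio, and subtract $1$. The key is that each value can be written purely in terms of the initial pool value $V = p_x x + p_y y$ and the relative price changes $\delta_x,\delta_y$, after which the claimed formula falls out in one line.

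First I would handle $V_n$, the value of the uninvested portfolio. Here the token counts $x$ and $y$ are frozen while the prices move to $p_x' = \delta_x p_x$ and $p_y' = \delta_y p_y$, so $V_n = \delta_x p_x x + \delta_y p_y y$. The arbitrage assumption gives $p_x x = p_y y$, and since each of these equals $V/2$ by the lemma on expressions involving $V$, the value collapses to $V_n = \frac{1}{2}V(\delta_x + \delta_y)$.

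Next I would handle $V_p$, the value of the invested portfolio, and this is where the modelling needs care. The stake only ever undergoes trades (the rebalancing performed by arbitrageurs once the market price moves), so by the invariance axiom the liquidity is unchanged: $L' = L$. Combined with the arbitrage assumption at the new prices, $x'/y' = p_y'/p_x'$, this determines the post-fluctuation state. Rather than solving for $x'$ and $y'$ explicitly, I would apply the lemma giving $L$ in terms of $V$ and the prices, evaluated at the new prices: since $L$ is unchanged, $V_p = 2L\sqrt{p_x' p_y'} = 2L\sqrt{\delta_x \delta_y}\,\sqrt{p_x p_y}$. The same lemma at the initial prices yields $2L\sqrt{p_x p_y} = V$, hence $V_p = V\sqrt{\delta_x \delta_y}$.

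Finally I would assemble the pieces: $\Lambda = \frac{V_p - V_n}{V_n} = \frac{V_p}{V_n} - 1 = \frac{2\sqrt{\delta_x \delta_y}}{\delta_x + \delta_y} - 1$, which is the stated identity. The main obstacle is conceptual rather than computational, namely justifying that the invested position rebalances to a state with the \emph{same} liquidity $L$ but with the pool rate re-equated to the market rate, so that the lemma for $L$ applies verbatim at the perturbed prices. Once that step is granted, everything else is substitution.
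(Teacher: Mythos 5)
Your proof is correct, and it reaches $V_p$ by a genuinely different route than the paper. The paper solves for the post-fluctuation pool composition explicitly: writing liquidity invariance as $L^2 = \frac{1}{r}x^2 = \frac{1}{r'}(x')^2$ and using the rate relation $\frac{r'}{r} = \frac{\delta_y}{\delta_x}$ (a consequence of the arbitrage axiom holding at both the old and new prices), it obtains $x' = \sqrt{\delta_y/\delta_x}\,x$ and then $V_p = 2x'p_x' = 2p_xx\sqrt{\delta_x\delta_y}$. You bypass $x'$ and $y'$ altogether by evaluating the lemma $V = 2L\sqrt{p_xp_y}$ (the ``Expression for $L$'' lemma, which the paper proves but never actually invokes in its own impermanent-loss proof) at both the initial and the perturbed prices with $L$ held fixed by the invariance axiom, so that $V_p = V\sqrt{\delta_x\delta_y}$ drops out with no square-root algebra. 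The modelling step you single out as the main obstacle --- that the arbitrage rebalancing consists of trades, hence preserves $L$, and restores $x'/y' = p_y'/p_x'$ --- is exactly the assumption the paper makes implicitly when it writes $L^2 = \frac{1}{r'}(x')^2$ with the same $L$, so you assume nothing beyond what the paper does; your $V_n$ computation is identical to theirs up to the substitution $p_xx = p_yy = V/2$. What your route buys is brevity and a clean invariant statement (at fixed liquidity, the pool value scales by $\sqrt{\delta_x\delta_y}$); what the paper's explicit route buys is the intermediate formula $x' = \sqrt{\delta_y/\delta_x}\,x$, which it reuses verbatim later in the proofs of the fee-model theorems.
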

\begin{proof}
In this situation, $V_p$ and $V_n$ are defined by the following equations : 
\begin{align*}
    V_n  &= xp_x' + yp_y' \\
    V_p &= x'p_x' + y'p_y' .
\end{align*}
From our pool model, we have both : 
\begin{equation}
  L^2 =
    \begin{cases}
      \frac{1}{r}x^2  \\
      \frac{1}{r'}(x')^2 
    \end{cases}       
\end{equation}
Let $r'$ be the final exchange rate of the pool. We have :
\[
    x' = \sqrt{\frac{\delta_y}{\delta_x}}x \cmt{because $\frac{r'}{r} = \frac{\delta_y}{\delta_x}$.}
\]
and 
\begin{align*}
    V_p &= x'p_x' + y'p_y' \\
       &= 2x'p_x' \cmt{from the fixed product assumption} \\
       &= 2p_x'\sqrt{\frac{\delta_y}{\delta_x}}x  \\
       &= 2p_x x\sqrt{\delta_y\delta_x} \cmt{using $p_x' = \delta_x p_x$.}
\end{align*}
 We then find an expression of $V_n$ only involving $x$: 
\begin{align*}
    V_n &= xp_x' + yp_y' \\
       &= xp_x' + \frac{xp_x}{p_y}p_y' \\
       &= xp_x' + \delta_yxp_x\\
       &= xp_x\left( \delta_x + \delta_y \right) .
\end{align*}
Computing the difference gives: 
\begin{align*}
V_p - V_n  &= 2p_xx\sqrt{\delta_y\delta_x} - xp_x\left( \delta_x + \delta_y \right)\\
          &= p_xx\left( 2\sqrt{\delta_y\delta_x} -\left( \delta_x + \delta_y \right) \right) .
\end{align*}
The latter can finally be used to compute the impermanent loss : 
\begin{align*}
    \Lambda &= \frac{V_p - V_n}{V_n} \\
          &= \frac{p_xx\left( 2\sqrt{\delta_y\delta_x} -\left( \delta_x + \delta_y \right) \right)}{xp_x\left( \delta_x + \delta_y \right)} \\
          &= 2\frac{\sqrt{\delta_y\delta_x}}{\delta_x + \delta_y} - 1
\end{align*}
\end{proof}

In Figure (\ref{fig:onevaril}) we plot $\Lambda$ when only one of the tokens' price changes. To have a better picture of what is happening we also plot in Figure (\ref{fig:onevarpc})  the relative portfolio comparison. \\ 
These plots are misleading because they give the impression that one is always better off not providing liquidity. This is because the accrued fees are not taken into account in the impermanent loss. 

\subsection{Relative portfolio evolution with time}

As seen in the previous section, impermanent loss does not take into account accrued fees and can therefore be misleading. Before providing liquidity, it is important to understand the winning scenarii and the risks that one is willing to take. In this section, we explore the evolution of a porfolio when providing liquidity or not and according to the way the fees are dealt with. There are 2 ways of dealing with trading fees : 
\begin{enumerate}
    \item Automatically reinject the fees as liquidity (e.g Uniswap v2);
    \item Collect the fees and store them separately (e.g Beaker);
\end{enumerate}
In the rest of the section we respectively reference them as fee model (1) and (2). \\
As swaps occure unpredictably, we consider the evolution according to a given liquidity growth $\alpha$ (\% per year) and we assume, for simplicity, that the growth is linear (ie. that the swaps happen as the same frequence with the same value). 
\begin{theorem}[Relative evolution for fee model (1)]
Let $\delta_x$ and $\delta_y$ be the relative price change of $\mathbf{X}$ and $\mathbf{Y}$ and $t$ the time passed in years. 
Then, in the fee model (1), the relative evolution $\Delta_1(t)=\frac{V_p(t)}{V(t)}$ of the portfolio of a liquidity provider is:
\[
    \Delta_1(t) = \sqrt{\delta_y\delta_x}\left(1 + \alpha t\right) .
\]
\end{theorem}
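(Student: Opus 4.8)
The plan is to express the value of the invested position directly in terms of the pool's liquidity $L$ and the token prices, using the \emph{Expression for L} lemma, which can be rewritten as $V = 2L\sqrt{p_yp_x}$. The point of this reformulation is that it cleanly separates the two effects at play: the movement of token prices, which enters through $\sqrt{p_yp_x}$, and the accrual of fees in model (1), which enters through $L$. I would denote by $V_0 = 2L_0\sqrt{p_yp_x}$ the value of the portfolio at the moment it is invested, where $L_0$ is the initial liquidity, and read the stated ratio $\Delta_1(t)$ as the value of the invested position at time $t$ divided by this initial reference value.

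First I would translate fee model (1) into a single statement about $L$. Since the trading fees are reinjected into the pool as liquidity and the liquidity growth is assumed linear at rate $\alpha$ per year, the liquidity of the provider's position at time $t$ is $L(t) = L_0(1+\alpha t)$. This is the only place where the fee model intervenes. Next I would account for the price movements: after the prices have moved to $p_x' = \delta_x p_x$ and $p_y' = \delta_y p_y$, the arbitrage assumption guarantees that the pool rebalances so that the \emph{Expression for L} lemma still holds with the current prices. Applying it at time $t$ gives
\[
    V_p(t) = 2L(t)\sqrt{p_y'p_x'} = 2L_0(1+\alpha t)\sqrt{\delta_y\delta_x}\sqrt{p_yp_x} = V_0\sqrt{\delta_y\delta_x}(1+\alpha t),
\]
and dividing by $V_0$ yields $\Delta_1(t) = \sqrt{\delta_y\delta_x}(1+\alpha t)$, as claimed. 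As a sanity check, at $t=0$ with $\delta_x=\delta_y=1$ one recovers $\Delta_1(0)=1$. Equivalently, one can avoid reintroducing $L$ and instead reuse the impermanent-loss computation, where the fee-free invested value was already found to be $V_p = 2p_xx\sqrt{\delta_y\delta_x} = V_0\sqrt{\delta_y\delta_x}$, and simply multiply it by the fee factor $(1+\alpha t)$.

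The algebra here is short; the real work is conceptual. The main obstacle is pinning down the model precisely: one must justify that fee reinjection multiplies $L$ by exactly $(1+\alpha t)$ without otherwise disturbing the relation $V = 2L\sqrt{p_yp_x}$, and one must be explicit that the denominator $V(t)$ in the statement is the initial invested value $V_0$ rather than a time-varying reference such as the non-invested value $V_n$ from the impermanent-loss section. Getting either convention wrong rescales the prefactor, so the care lies in the bookkeeping rather than in the computation itself.
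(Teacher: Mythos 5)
Your proof is correct, and its main route differs from the paper's in a worthwhile way. The paper works at the level of token amounts: it notes that with compounding $L(t)=L(0)(1+\alpha t)$, deduces that with no rate change $x(t)=x(0)(1+\alpha t)$, and then recycles the impermanent-loss computation ($x'=\sqrt{\delta_y/\delta_x}\,x$, hence $V_p=2p_xx\sqrt{\delta_y\delta_x}$) to fold in the price change, finally dividing by the initial value $V=2xp_x$. You instead invoke the ``Expression for $L$'' lemma in the form $V=2L\sqrt{p_xp_y}$ at time $t$ with the updated prices; since the arbitrage axiom keeps that lemma valid at every instant, the fee effect enters only through $L(t)=L_0(1+\alpha t)$ and the price effect only through $\sqrt{p_x'p_y'}=\sqrt{\delta_x\delta_y}\sqrt{p_xp_y}$, so the multiplicative form of the answer is immediate. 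What your route buys is exactly this clean separation of the two effects, with no need to re-derive $x'$ or track token balances; what the paper's route buys is continuity with the preceding section, reusing the impermanent-loss algebra verbatim (indeed, the ``equivalent'' alternative you sketch at the end of your second paragraph is precisely the paper's proof). Your bookkeeping remarks are also faithful to the paper: its proof divides by ``the initial value of the portfolio $V=2xp_x$'' despite the $V(t)$ notation in the statement, and it, too, simply posits linear growth of $L$ as the definition of the fee model rather than deriving it.
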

\begin{proof}
We first compute the number $x(t)$ of tokens $\mathbf{X}$ that should be in the pool if the price does not change.
When  compounding with a liquidity evolution $\alpha$, the liquidity of the pool will grow by $\alpha$ per year. Therefore, we have:  
\[
    L(t) = L(0)\left(1 + \alpha t\right).
\]
When the rate of the pool does not change, both tokens grow equally, therefore:
\begin{align*}
    x(t) = x(0)\left(1 + \alpha t\right) .
\end{align*}

Because, $L$ does not depend on $r$, we can reuse $x' = \sqrt{\frac{\delta_y}{\delta_x}}x$ and the rest of the proof for the impermanent loss. It gives:
\[
    V_p(t) = 2p_xx\sqrt{\delta_y\delta_x}\left(1 + \alpha t\right) .
\]
Dividing by the initial value of the portfolio $V = 2xp_x$ ends the proof.
\end{proof}

\begin{remark}
One could think that with automatic compounding, the value of the portfolio should grow exponentially. In fact, it is only true if we assume that the the exchange volume of the pool grows at the same rate as the liquidity of the pool. However, there is no reason that this should be true. Some exchanges implicitly make this false assumption when they compute the expected APY of the pool.
\end{remark}

\begin{theorem}[Relative evolution for fee model (2)]
Let $\delta_x$ and $\delta_y$ be the relative price change of $\mathbf{X}$ and $\mathbf{Y}$ and $t$ the time passed in years. 
Then, in the fee model (2), the relative evolution $\Delta_2(t)=\frac{V'(t)}{V(t)}$ of the portfolio of a liquidity provider is:
\[
    \Delta_2(t) = \sqrt{\delta_y\delta_x}  + \alpha t\frac{\delta_x+\delta_y}{2} .
\]
\end{theorem}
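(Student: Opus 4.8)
The plan is to follow the same two-stage structure as the proof of the fee-model-(1) theorem, splitting the current portfolio value $V'(t)$ into the value of the tokens that remain locked in the pool and the value of the fees that are now held aside. The crucial difference is that in fee model (2) the fees are never reinjected, so the liquidity of the pool stays frozen at $L(0)$ and acquires no $(1+\alpha t)$ factor. Consequently the pool component of $V'(t)$ is governed by exactly the same computation as in the impermanent-loss theorem: reusing $x' = \sqrt{\delta_y/\delta_x}\,x$ together with the fixed-product assumption, this component is precisely $V_p = 2 p_x x \sqrt{\delta_y\delta_x}$.

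Next I would quantify the fees that have been set aside. Under the linear-growth assumption, the fees collected over a span of $t$ years are exactly those that, in fee model (1), would have been compounded into the pool at constant price; since that compounding sends $x \mapsto x(1+\alpha t)$ and $y \mapsto y(1+\alpha t)$, the collected fees amount to $\alpha t\, x$ tokens $\mathbf{X}$ and $\alpha t\, y$ tokens $\mathbf{Y}$, held in the initial proportion. Because these tokens simply sit in the provider's wallet rather than in the pool, their value at time $t$ is obtained by marking them to the new prices, $F(t) = \alpha t\,(x p_x' + y p_y')$. Using $p_x' = \delta_x p_x$, $p_y' = \delta_y p_y$ and the arbitrage relation $y p_y = x p_x$, this collapses to $F(t) = \alpha t\, x p_x(\delta_x + \delta_y)$, i.e. exactly $\alpha t$ times the ``not invested'' value $V_n$ from the impermanent-loss section.

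Finally I would add the two contributions, $V'(t) = 2 p_x x \sqrt{\delta_y\delta_x} + \alpha t\, x p_x(\delta_x+\delta_y)$, and divide by the initial portfolio value $V = 2 x p_x$ (the same denominator used for $\Delta_1$), which yields the claimed $\Delta_2(t) = \sqrt{\delta_y\delta_x} + \alpha t\,\frac{\delta_x+\delta_y}{2}$. I expect the only real obstacle to be the modelling step for $F(t)$ rather than any algebra: one must justify treating the set-aside fees as a fixed basket $(\alpha t\, x,\ \alpha t\, y)$ revalued at the final prices, which tacitly assumes --- as the fee-model-(1) proof already does --- that fee accrual happens at the initial price and the price move is applied afterwards. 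This is a genuine simplification that ignores the order in which fees accrue relative to the price path, so I would flag it as the same approximation inherited from the previous theorem rather than as an exact continuous-time statement.
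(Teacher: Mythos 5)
Your proposal is correct and follows essentially the same route as the paper's proof: the same decomposition $V'(t) = V_p(t) + V_{out}(t)$, the same reuse of the impermanent-loss computation for the pool component, the same fee basket $(\alpha t\,x,\ \alpha t\,y)$ marked to the new prices, and the same simplification via $p_x x = p_y y$ after dividing by $2 x p_x$. Your closing remark about the implicit modelling assumption (fees accrue at the initial price, revalued afterwards) is a fair observation but does not change the argument, which matches the paper's.
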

\begin{proof}
In the fee model (2) the accrued fees are stored separately and not compounded. 
With our assumption on fees accumulation, the accumulated amount of tokens $x_{out}(t)$ in the separate storage is such that : 
\[
    x_{out}(t) = x(0)\alpha t .
\]
Therefore the final value of the portfolio is:
\begin{align*}
    V'(t) &= V_p(t) + V_{out}(t) \\
      &= 2p_xx\sqrt{\delta_y\delta_x} + p_x'x_{out}(t) + p_y'y_{out}(t) \cmt{re-using result from IL}.
\end{align*}
We therefore get :
\begin{align*}
    \Delta_2(t) &= \frac{V'(t)}{V(t)} \\
                       &= \sqrt{\delta_y\delta_x} + \frac{p_x'x_{out}(t) + p_y'y_{out}(t)}{2p_xx} \\
                       &= \sqrt{\delta_y\delta_x} + \left( \frac{\delta_x}{2} + \frac{p_y'y}{2p_yy} \right)\alpha t \cmt{using $p_xx =p_yy $} \\
                       &= \sqrt{\delta_y\delta_x}  +\alpha t\frac{\delta_x+\delta_y}{2} .
\end{align*}
\end{proof}

Figure (\ref{fig:feemodelcomp}) compares the two fee model for a liquidity growth $\alpha = 20\%$. This figure show that fee model (2) is both safer and more profitable than fee model (1). Intuitively, rewards, when compounded, are also prone to impermanent loss.  However, there is a catch: what happens in fee model (2) when some people regularly compound and others don't?

\subsection{Return on Investment}
At the end of the previous section, we asked an important question. To answer it, we are first going to compute the return on investment for users compounding and users not compounding their fees. 
We start by computing the liquidity evolution for people compounding their rewards. 

\begin{theorem}[]
Let $L_c$ (resp. $L_{nc}$) be the liquidity of the pool owned by people compounding (resp. not compounding) their rewards and $L_0$ be the initial liquidity of the pool. Then, $L_c$ satisfies the differential equation:
\[
    \dv{L_c}{t} = \alpha L_0\frac{L_c}{L_c + L_{nc}}.
\]
\end{theorem}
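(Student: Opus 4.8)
The plan is to translate the fee-distribution mechanism of the pool into a rate equation for the compounders' liquidity. The starting point is to pin down the \emph{total} rate at which the pool accrues fees. By the linear-growth assumption used throughout this section, swaps occur at a fixed frequency and value, so the absolute amount of fees generated per year is constant in time; matching this to the fee model (1) result $L(t) = L_0(1+\alpha t)$, whose derivative is $\alpha L_0$, I would identify this constant total fee-accrual rate (measured in liquidity units) as $\alpha L_0$. The crucial point here is that this rate is governed by the \emph{initial} liquidity $L_0$ together with the fixed swap volume, and not by the current total liquidity $L_c + L_{nc}$ sitting in the pool.

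Next I would invoke that fees are paid out to liquidity providers in proportion to the share of the pool they own. At time $t$ the compounders hold the fraction $\frac{L_c}{L_c + L_{nc}}$ of the pool, so they collect fees at the rate
\[
    \alpha L_0\,\frac{L_c}{L_c + L_{nc}} .
\]
Since compounders by definition immediately reinject these fees as liquidity, the growth of $L_c$ is exactly this collection rate, which is the claimed differential equation. Symmetrically, the non-compounders leave their stake untouched and store their fees separately, so $L_{nc}$ remains constant at its initial value; this is precisely what lets the right-hand side be written in terms of $L_c$ with $L_{nc}$ playing the role of a fixed parameter.

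I expect the main obstacle to be conceptual rather than computational: justifying that the total fee rate is the constant $\alpha L_0$ rather than a quantity proportional to the instantaneous pool size. This hinges entirely on the modelling assumption that swap volume, and hence fee income, is fixed in absolute terms over time, so care is needed to state that assumption explicitly and to reconcile the units (fees measured as an increment of liquidity) with the $\alpha L_0$ inherited from fee model (1). Once that rate is fixed, the remaining work — distributing it pro rata across the two groups and equating the compounders' share with $\dv{L_c}{t}$ — is immediate.
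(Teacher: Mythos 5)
Your proposal is correct and matches the paper's proof essentially step for step: the paper also takes the total fees generated to be $f(t) = \alpha L_0 t$ (so the fee rate is the constant $\alpha L_0$), notes that $L_{nc}$ is constant by definition, and obtains $\dv{L_c}{t}$ as the product of $\dv{f}{t}$ with the compounders' pool share $\frac{L_c}{L_c + L_{nc}}$. Your additional care in justifying why the fee rate depends on $L_0$ rather than the instantaneous liquidity is a point the paper glosses over, but it is the same argument.
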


\begin{proof}
By definition $L_{nc}$ is a constant.
The total amount of fees generated at time $t$ is:
\[
    f(t) = \alpha L_0 t .
\]
Let $\mathcal{P}(t)=\frac{L_c(t)}{L_c(t) + L_{nc}}$ the proportion of the pool owned by people compounding, then the growth of $L_c$ is equal to the product of the  growth in fees and $\mathcal{P}(t)$:
\begin{align*}
    \dv{L_c}{t}(t) &= \dv{f}{t}\mathcal{P}(t) \\
                &= \alpha L_0 \frac{L_c(t)}{L_c(t) + L_{nc}}
\end{align*}
\end{proof}

\begin{definition}[Return on investment for compounding users]
For compounding users, we call return on investment the ratio
\[
    \rho_c(t) = \frac{L_c(t)}{L_c(0)} .
\]
\end{definition}

\begin{definition}[Return on investment for not compounding users]
For not compounding users, we call return on investment the ratio
\[
    \rho_{nc}(t) = 1 + \frac{f}{L_{nc}},
\]
where $f$ is the total amount of accrued fees for these users.

\end{definition}

Because the solution of the previous differential equation has no simple expression, an equation solver was used. A worst case scenario (99\% users compounding) is plotted in  Figure (\ref{fig:roicomp}). The plot shows that users compounding have a $20.02\%$ ROI and the others a $18.20\%$ ROI instead of the initial $20\%$ expected ROI. 

\begin{remark}
Because users only lose $1.8 \%$ ROI by not recompounding it is therefore likely that they would not care. 
\end{remark}

From this, it is now possible to compare the portfolio evolution of users that recompound their fees and those who don't. This comparison is made in Figure (\ref{fig:correctfeemodelcomp}). This new comparison shows that users that recompound have a slight hedge when the relative price does not change too much. Still, as pointed out in previous remark, regularly compounding rewards for a little premium does not seem worth it. Therefore, users investing in Beaker pools should expect their portfolio to change as described in (\ref{fig:feemodelcomp}).

\newpage
\section{Conclusion}
To sum up, our analysis has shown that investing in a liquidity pool using a constant-product market making function boils down to making a bet on the price evolution of both tokens. We have also proven that the protocol is safer and more profitable when no one recompounds fees.

%Then figures 
\newpage
\section{Figures}
\begin{figure}[H] 
    \centering
    \begin{tikzpicture}[scale = 1.3]
        \begin{axis}[
            axis lines = left,
            xlabel = Relative price change in \%,
            ylabel = Impermanent loss in \%,
            xtick pos=upper,
            xticklabel pos=upper
        ]
            \addplot [
                domain=0:300, 
                samples= 100, 
                color=red,
            ]
            {200*sqrt((x/100))/(1 + x/100) - 100};
            \end{axis}
        \end{tikzpicture}
    \caption{Impermanent loss when only the price of one coin changes}
    \label{fig:onevaril}
\end{figure}
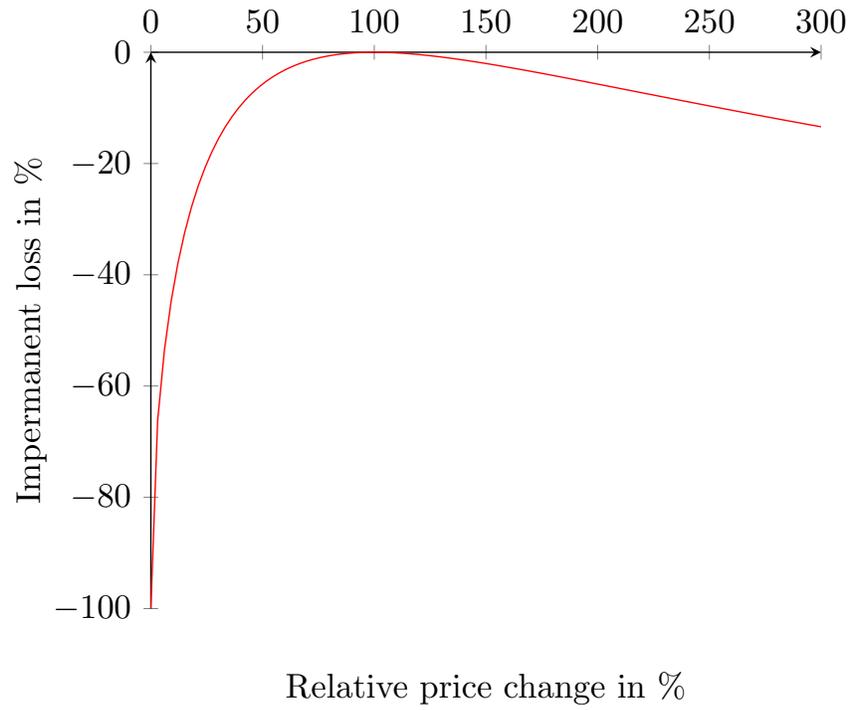

\begin{figure}[H]
    \centering
    \begin{tikzpicture}[scale = 1.3]
        \begin{axis}[
            axis lines = left,
            xlabel = Relative price evolution in \%,
            ylabel =  Relative portfolio evolution in \%,
            legend pos=outer north east,
            legend cell align=left,
        ]
            \addplot [
                domain=-100:200, 
                samples=100, 
                color=red,
            ]
            {100*sqrt((x+100)/100)};
            \addlegendentry{Providing liquidity}
            \addplot [
                domain=-100:200, 
                samples= 100, 
                color=blue,
            ]
            {50*(1+(x+100)/100)};
            \addlegendentry{Not investing}
            \end{axis}
        \end{tikzpicture}
    \caption{Relative portfolio evolution when the relative price of one coin changes}
    \label{fig:onevarpc}
\end{figure}
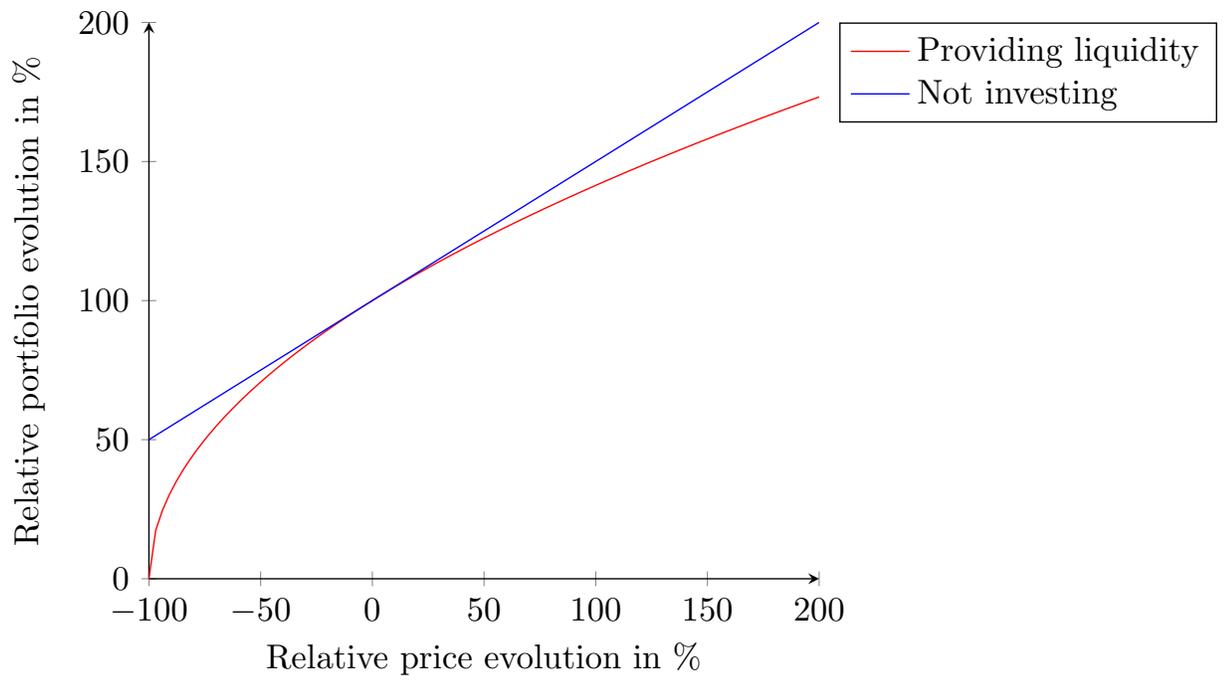

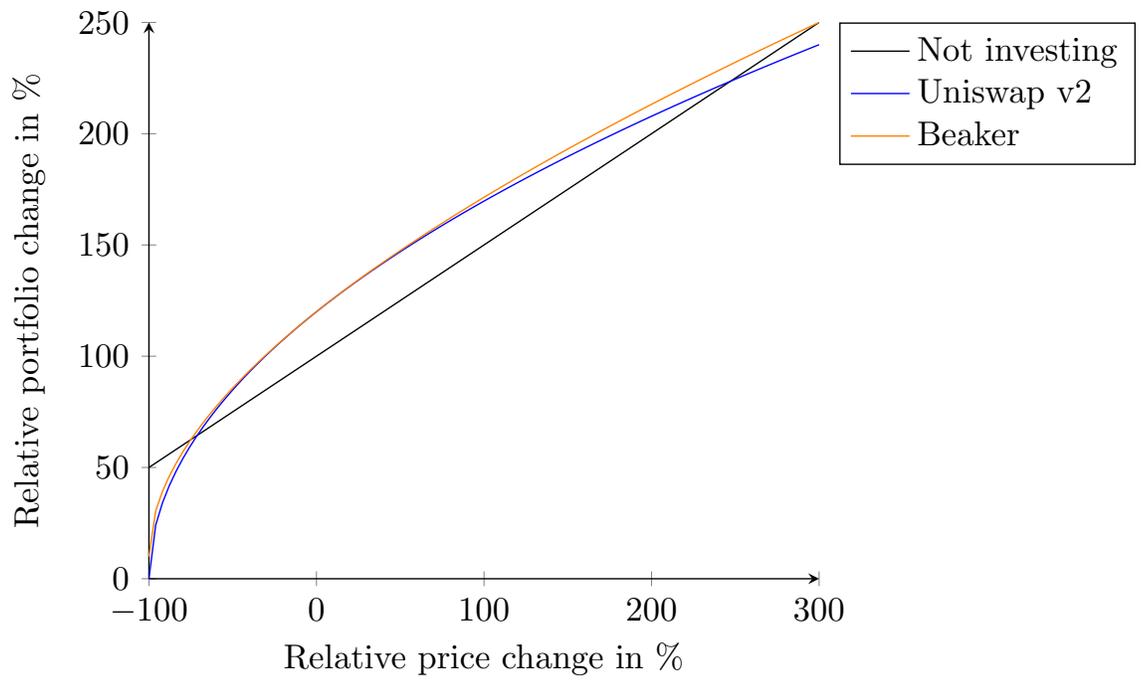
\begin{figure}[H]
    \centering
    \begin{tikzpicture}[scale = 1.3]
        \begin{axis}[
            axis lines = left,
            xlabel = Relative price change in \%,
            ylabel =  Relative portfolio change in \%,
            legend pos=outer north east,
            legend cell align=left,
        ]
            \addplot [
                domain=-100:300, 
                samples=100, 
                color=black,
            ]
            {50*(1+(x+100)/100)};
            \addlegendentry{Not investing}

            \addplot [
                domain=-100:300, 
                samples=100, 
                color=blue,
            ]
            {120*sqrt((x+100)/100)};
            \addlegendentry{Uniswap v2}

            \addplot [
                domain=-100:300, 
                samples=100, 
                color=orange
            ]
            {100*sqrt((x+100)/100) + 20*((x+100)/100+1)/2};
            \addlegendentry{Beaker}
            
            \end{axis}
        \end{tikzpicture}
    \caption{Relative portfolio change depending on the pool model after a year when the price of one coin changes and with 20 \% APR}
    \label{fig:feemodelcomp}
\end{figure}

\begin{figure}[H]
    \centering
    \begin{tikzpicture}[scale = 1.3]
         \begin{axis}[
            axis lines = left,
            xlabel = Time in years,
            ylabel =  ROI,
            legend pos=outer north east,
            legend cell align=left,
        ]
        
            \addplot[
                color=blue,
            ]
            table {figures/data/comp_roi.txt};
            \addlegendentry{Compounding}
            
            \addplot[
                color=red,
            ]
            table {figures/data/not_comp_roi.txt};
            \addlegendentry{Not compounding}
        \end{axis}
    \end{tikzpicture}
    \caption{ROI comparison when 99\% of users recompound their rewards and for an initial 20\% APR}
    \label{fig:roicomp}
\end{figure}
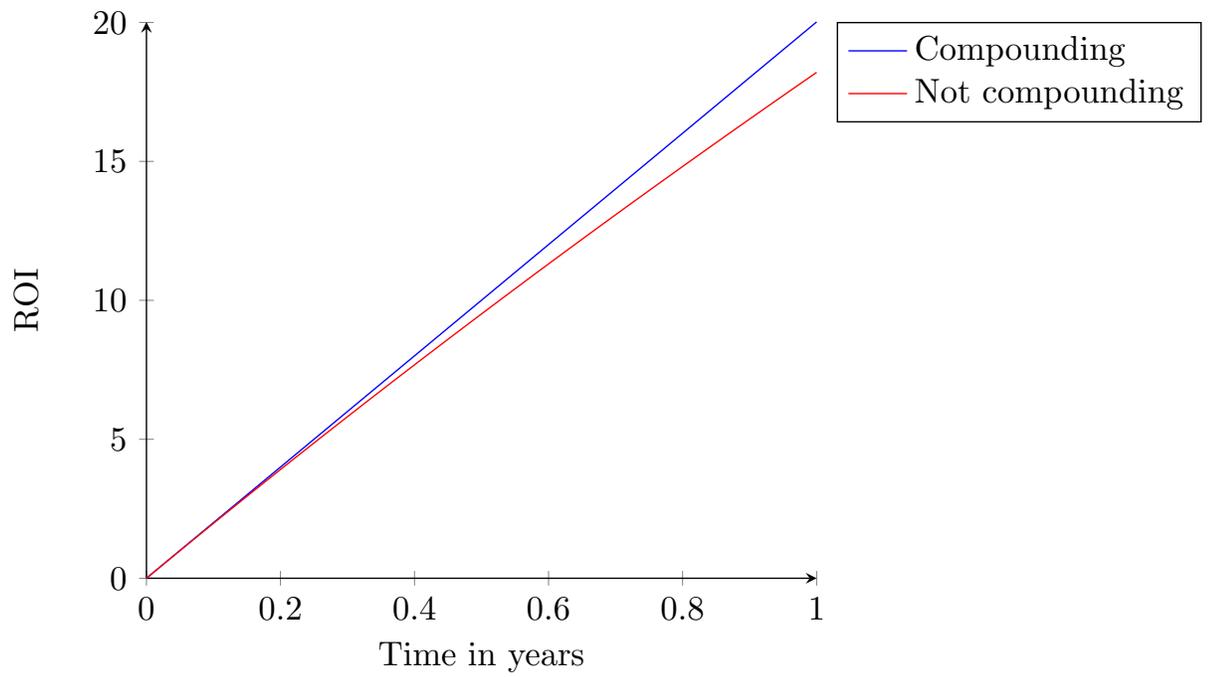

\begin{figure}[H]
    \centering
    \begin{tikzpicture}[scale = 1.3]
        \begin{axis}[
            axis lines = left,
            xlabel = Relative price change in \%,
            ylabel =  Relative portfolio change in \%,
            legend pos=outer north east,
            legend cell align=left,
        ]
            \addplot [
                domain=-100:150, 
                samples=100, 
                color=black,
            ]
            {50*(1+(x+100)/100)};
            \addlegendentry{Not investing}
            
            \addplot [
                domain=-100:150, 
                samples=100, 
                color=blue,
            ]
            {120.02*sqrt((x+100)/100)};
            \addlegendentry{Compounding}

            \addplot [
                domain=-100:150, 
                samples=100, 
                color=pink
            ]
            {100*sqrt((x+100)/100) + 18.2*((x+100)/100+1)/2};
            \addlegendentry{Not compounding}
            
            \end{axis}
        \end{tikzpicture}
    \caption{Corrected relative portfolio change depending on compounding fees after a year when the price of one coin changes and with 20 \% initial APR}
    \label{fig:correctfeemodelcomp}
\end{figure}
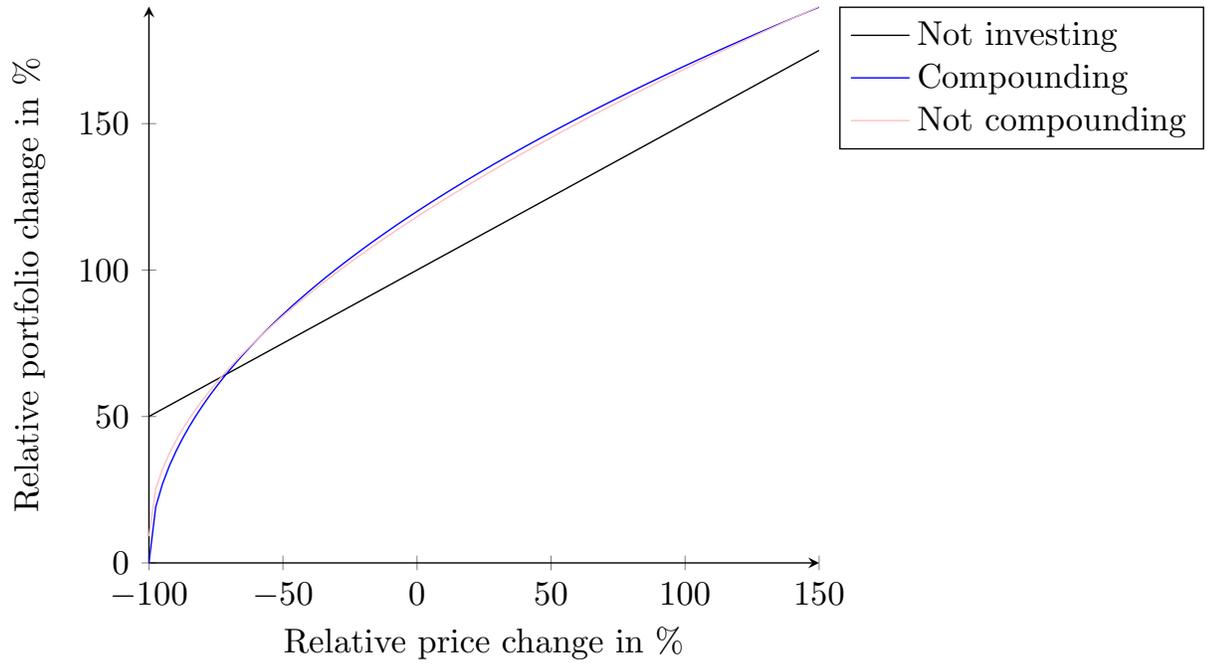

\bibliography{ref}
\bibliographystyle{plain}

\end{document}